%
%
%
%
%
\RequirePackage{fix-cm}
\documentclass[smallextended]{svjour3}       
\smartqed  
\usepackage{graphicx}
\usepackage{textcomp}
\usepackage{hyperref}
%
%
%
%
%


\usepackage[show]{ed}
\usepackage{amssymb}
\usepackage{amsmath}
\usepackage{comment}
\usepackage{topcapt}
\usepackage{algorithm}
\usepackage[noend]{algpseudocode}
\usepackage{color, colortbl}

\definecolor{Gray}{gray}{0.9}
\definecolor{Green}{rgb}{152,251,152}

\DeclareMathOperator*{\argmax}{arg\,max}
\DeclareMathOperator*{\argmin}{arg\,min}
\DeclareMathOperator{\vectorization}{Vec}
\renewcommand{\vec}[1]{\mathbf{#1}}

\begin{document}

\title{Exact and efficient top-$K$ inference for multi-target prediction by querying separable linear relational models}

\titlerunning{Efficient top-$K$ inference in multi-target prediction}        

\author{Michiel Stock         \and
        Krzysztof Dembczy\'nski \and Bernard De Baets \and Willem Waegeman
}


\institute{Michiel Stock, Bernard De Baets and Willem Waegeman \at
				KERMIT, Department of Mathematical Modelling, Statistics and Bioinformatics \\ 
Ghent University, 9000 Ghent, Belgium \\
\email{firstname.lastname@ugent.be}
\and
Krzysztof Dembczy{\'n}ski \at 
 Institute of Computing Science \\
Poznan University of Technology, Poznan 60-695, Poland \\
           \email{krzysztof.dembczynski@cs.put.poznan.pl}
}

\date{Received: date / Accepted: date}

\maketitle

\begin{abstract}
{
Many complex multi-target prediction problems that concern large target spaces are characterised by a need for efficient prediction strategies that avoid the computation of predictions for all targets explicitly. Examples of such problems emerge in several subfields of machine learning, such as collaborative filtering, multi-label classification, dyadic prediction and biological network inference. In this article we analyse efficient and exact algorithms for computing the top-$K$ predictions in the above problem settings, using a general class of models that we refer to as separable linear relational models. We show how to use those inference algorithms, which are modifications of well-known information retrieval methods, in a variety of machine learning settings. Furthermore, we study the possibility of scoring items incompletely, while still retaining an exact top-$K$ retrieval. Experimental results in several application domains reveal that the so-called threshold algorithm is very scalable, performing often many orders of magnitude more efficiently than the naive approach.
}

\keywords{top-$K$ retrieval \and exact inference \and precision at $K$ \and multi-target prediction}
\end{abstract}

\section{Introduction and formal problem description}
\label{intro}
Subjected to both great expectations as well as great criticism, ``Big Data" and ``Big Data Analytics" are two notions that are without doubt highly popular these days. Not only does ``Big Data" generate a number of interesting research questions for classical machine learning settings such as classification and regression problems, it also leads to novel challenges in less traditional learning settings. For example, though multi-target prediction is not a recent field (see e.g.~\cite{Blockeel1998}), Big Data problems nowadays routinely deal with very large output spaces. Roughly speaking, multi-target prediction can be seen as a term that intends to unify methods that are developed in several subfields of machine learning. Multi-target prediction has as general research theme the principles behind learning problems where predictions for multiple targets need to be generated simultaneously\footnote{We refer to the ICML 2013 tutorial and the ECML-PKDD 2014 workshop on multi-target prediction for an overview.}. 
In this work we will use a general notation $s(x,y)$ to denote the predicted score of a multi-target prediction model, where $x$ will be referred to as an instance or query, and $y$ as a target, a label or an item. We give a couple of examples that are analyzed in more detail later in this article to make this point more clear: 
\begin{itemize}
\item In collaborative filtering methods for recommender systems, $s(x,y)$ will represent a score that indicates the degree to which query $x$ will be interested in item $y$. Using methods such as matrix factorization, users and items are here typically represented by a set of latent variables -- see e.g.\ \cite{Takacs2008,Volinsky2009}. 
\item In multi-label classification and multivariate regression (often denoted as multi-output regression), $s(x,y)$ will represent the prediction for instance $x$ and label $y$, where typically a feature representation is available for the user, whereas no additional information about $y$ is known in the basic setting -- see e.g. \cite{Agrawal2013,Dembczynski2012,Tsoumakas2007}.
\item In content-based filtering, dyadic prediction and network inference problems, feature vectors will be available for both instances $x$ and targets $y$. For example, in content-based filtering, one would recommend items to users based on user profiles and side information about items \cite{Basilico2004a,chu2009}. In protein-ligand interaction modelling for drug design, a bio-informatics application of dyadic prediction, one would model a compatibility score $s(x,y)$ based on feature descriptions of proteins and ligands \cite{Jacob2008,Wang2015}. 
\end{itemize}

Using a generic methodology, we analyze in this article efficient methods for finding the best scoring targets in the above application domains. Existing machine learning methods in those domains are often suffering from severe bottlenecks when predictions need to be generated and stored for a large number of couples $(x,y)$. As a result, instead of computing the scores $s(x,y)$ explicitly for all couples, one could argue that it suffices to retrieve the objects $y$ resulting in the highest scores for a given object ${x}$. For example, in computational drug design it suffices to retrieve the best-binding molecules for a given protein -- see e.g.~\cite{Jacob2008a}. Similarly, in marketing applications of recommender systems, one is often mainly interested in those products that are most relevant for a given user -- see e.g.~\cite{Su2009}. More formally, rephrased as an information retrieval problem, we are often interested in finding the set that contains the $K$ highest-scoring objects of this database with respect to a certain instance (or query)~$x$. 

Let us introduce some further notation to make this problem statement a bit more precise. Using a generic notation, we consider two types of objects $x\in\mathcal{X}$ and $y\in\mathcal{Y}$. For simplicity, we assume that $\mathcal{Y}$ is finite with cardinality~$M$. We will analyse bilinear models that calculate for each couple $(x,y)$ the following score as prediction:
\begin{equation}
s(x,y) = \vec{u}(x)^\intercal  \vec{t}(y) = \sum_{r=1}^R u_r(x) \, t_r(y) \label{s_in_tu} \,.
\end{equation} 
As the representations need to be separated for $x$ and $y$, we call the above class of models separable linear relational (SEP-LR) models. Each of the two objects is represented by an $R$-dimensional model vector:
\begin{eqnarray*}
\vec{u}(x) &=& (u_1(x), u_2(x), \ldots, u_R(x))^\intercal \,, \\
\vec{t}(y) &=& (t_1(y), t_2(y), \ldots, t_R(y))^\intercal \,. 
\end{eqnarray*}
The general applicability of SEP-LR models to the application domains that are mentioned above will be further discussed in Section~\ref{SLRapplications}. In matrix factorization methods, $R$ will correspond to the rank of the low-rank decomposition. In multi-label classification problems, it will be the number of features. 

 As we are interested in computing only a subset of the predictions for couples $(x,y)$, we define the set $S_x^K$, containing the $K$ most relevant objects $y$ for the query $x$. The problem that we intend to solve can then be formally written as: 
 \begin{eqnarray}
 \label{eq:prob_statement}
S_x^K &=& \argmax_{\mathcal{S} \in [\mathcal{Y}]^K} \min_{y \in \mathcal{S}} s(x,y) \,,
 \end{eqnarray}
with $[\mathcal{Y}]^K$ the set of all $K$-element subsets of $\mathcal{Y}$. Thus, for two objects $y\in S^K_{x}$ and $y'\notin S^K_{x}$, it holds that $s(x, y) \geq s(x, y')$. The set $S_x^K$ is not necessarily unique, because ties may occur. Let us remark that problem statement (\ref{eq:prob_statement}) is related to nearest neighbor search, but finding the point with maximum inner product or maximum cosine similarity is not identical to finding the point that is closest w.r.t.\ Euclidean distance:
\begin{eqnarray*} 
\argmin_{y \in \mathcal{Y}} || \vec{u}(x) -  \vec{t}(y) ||^2 &=& \argmin_{y \in \mathcal{Y}} ||\vec{t}(y)||^2 - 2 \vec{u}(x)^\intercal  \vec{t}(y) +||\vec{u}(x)||^2 \\
&=& \argmin_{y \in \mathcal{Y}} ||\vec{t}(y)||^2 - 2 \vec{u}(x)^\intercal  \vec{t}(y)\,.
\end{eqnarray*} 
The term $||\vec{u}(x)||^2$ can be dropped as it remains constant as soon as $x$ is fixed, but $||\vec{t}(y)||^2$ cannot be omitted. {When the norm of $\vec{t}(y)$ is the same for each $y$, this expression is equivalent with (\ref{s_in_tu}). In many applications, the norm of the items has a clear meaning and cannot be omitted. Also, in contrast to distances, we study more general problem settings in which $x$ and $y$ belong to a different domain and a distance between them is less natural to define.} Existing methods for speeding up nearest neighbor search, such as \cite{Elkan2003}, are hence not directly applicable to maximum inner product search.  

One approach to compute the maximum inner product would be to partition the target space using efficient data structures such as $k$-d trees~\cite{Bentley1975}, ball trees~\cite{Omohundro1989}, cover trees~\cite{Beygelzimer2006} or branch-and-bound search techniques \cite{Koenigstein2012}. Methods of that kind are able to find the top-scoring predictions in an efficient way for low-dimensional Euclidean embeddings (in our case, when $R$ is small), but they bring no improvement compared to a naive linear search when the dimensionality is larger than twenty. Space partitioning methods are hence inapplicable to the problem settings that are the focus of this paper. 

When time efficiency is more important than predictive power, one could opt for employing specialized approximate algorithms. Locality-sensitive hashing techniques, which are popular for the related problem of nearest neighbor search, would be a good choice. Recently, a locality-sensing hashing method for maximum inner product search has been developed \cite{Shrivastava2014,Shrivastava2015}. However, in addition to delivering approximate predictions, methods of that kind are usually restricted to finding the top-1 set. Another approximate method would be to cluster the queries $x$ in several groups, for which rankings of targets $y$ can be precomputed by means of predictive indices and related data structures \cite{Agarwal2012,Goel2009}.   

In this article we are interested in computing the set $S_x^K$ as efficiently as possible in an exact manner. To this end, we depart from a trained model $s(x,y)$ and we make a number of additional assumptions that are all quite natural for the applications we have in mind. First, we assume that the top-$K$ predictions should be returned, where $K\geq1$. Second, we also assume that $R$ can be bigger than twenty, so that space partitioning methods become inapplicable. Third, we assume that at prediction time queries $x$ need to be processed one-by-one. If queries are arriving in large batches, further speed-ups could be obtained by using specialized libraries for matrix multiplication. 

We analyse exact algorithms for inferring the top-$K$ set by adopting existing methods from the database and information retrieval literature. Indeed, a strongly related problem as (\ref{eq:prob_statement}) is often observed in information retrieval~\cite{Ilyas2008}. When queries and documents have sparse representations and relevance is defined by means of cosine similarity, one can reduce the computational complexity of retrieval by using data structures such as inverted indices \cite{Zobel2006}. We will exploit a similar idea. Even though most of the specialized techniques in information retrieval put a strong emphasis on sparsity, which makes them inapplicable to multi-target prediction problems, we will show that certain techniques can be used to compute the top-$K$ set more efficiently compared to linear search. In particular, we are analyzing in this paper Fagin's algorithm and extensions thereof \cite{Fagin1999}. Similar to inverted indices, those algorithms will score for a given query the items of several lists until one is guaranteed to have found the top-$K$ scoring items. More details are provided below.

The remainder of this work is structured as follows. We present and discuss in Section~\ref{queryAlgorithms} three exact algorithms for obtaining the $K$ highest scores for a given query. In Section~\ref{SLRapplications} we will give an overview of machine learning problems that can be cast in our general framework, followed by Section~\ref{experiments}, where we experimentally study the algorithms using datasets for several application domains. Finally, in Section~\ref{Faginconclfutwork}, we conclude with some practical guidelines and some directions for future research.

\section{Exact algorithms for top-$K$ inference}\label{queryAlgorithms}

The most straightforward way to solve problem (\ref{eq:prob_statement}) is by simply calculating the score $s(x,y)$ for each target $y$ and retaining the $K$ targets with the highest scores. 
We will call this algorithm the naive algorithm, as we have to calculate all $M$ scores to gather the top-$K$ highest scoring targets. Computing the score for one object has a time and space complexity of $\mathcal{O}(R)$, as this amounts to computing a weighted sum of $R$ terms. Apart from some set operations that can be done with a time complexity of $\mathcal{O}(1)$, we also occasionally have to update the current top-$K$ set when a new target is scored higher than the worst target in the current list. Using efficient data structures such as heaps, this can be done with a time complexity of $\mathcal{O}(\log K)$. Hence the time complexity for the naive algorithm is $\mathcal{O}((R+\log K)M)$.

We show that the problem can be solved more efficiently using exact methods that are well known in database research and information retrieval, namely Fagin's algorithm \cite{Fagin1999} and the so-called threshold algorithm \cite{Fagin2003}. Both algorithms in essence find in an efficient way the maximum of an aggregation operator $Q(z_1,...,z_R)$. More specifically, if $z_1,...,z_R$ are the $R$ grades of an object, then $Q(z_1,...,z_R)$ represents the overall grade of that object. Fagin's algorithm and the threshold algorithm both find in an exact but efficient manner the $K$ objects in a database with highest $Q(z_1,...,z_R)$. The algorithms assume that all variables $z_1,...,z_R$ belong to the interval $[0,1]$ and $Q$ has to be an increasing aggregation operator, i.e., $Q(z_1,...,z_R) \leq Q(z_1',...,z_R')$ if $z_r \leq z_r'$ for every $r$. 

Given a query $x$, problem statement (\ref{eq:prob_statement}) can be easily transformed into the original problem setting of Fagin by defining $z_r = u_r(x) t_r(y)$. The resulting values $z_r$ do not necessarily belong to the interval $[0,1]$, but they can be transformed accordingly for a fixed query $x$ and set of targets $\mathcal{Y}$. In what follows we therefore assume that Fagin's algorithm and the threshold algorithm are applicable, and we further adopt a machine learning notation in explaining the two algorithms.  

Key to both algorithms is a set of $R$ sorted lists $L_1,...,L_R$ that contain pointers to all the targets, ordered according to each of the $R$ model descriptors $t_i(y)$. The pseudo-code of Fagin's algorithm is given in Algorithm~\ref{FaginAlgorithm}. It first scans the targets that are at the top of the selected lists until $K$ targets are found that occur somewhere close to the top of each list (the random access phase). Those targets are stored in the set \texttt{targetsToCheck}. Subsequently, the score for all observed targets is calculated in a second phase, and the set $S_x^K$ is constructed as the $K$ targets with highest scores in this set (the sorted access phase).

It should be intuitively clear that the set $S_x^K$ will be computed in a correct manner. The stopping criterion implies that for at least $K$ targets the score $s(x,y)$ has been computed. The best targets in \texttt{targetsToCheck} will be returned as the set $S_x^K$, while the monotonicity of the scoring function guarantees that the scores for these $K$ targets are at least as high as for any target not in the considered top of the sorted lists. If ${u_r}(x)$ is negative, the corresponding list $L_r$ is reverted in the first part of the algorithm. This is equivalent to  transferring the sign of ${u}_r(x)$ to the corresponding features of $t_r(y)$, i.e.\ working with $|{u}_r(x)|$ and $-t_r(y)$. Without loss of generality, we can thus assume that the score $s(x,y)$ is increasing w.r.t.\ all $t_r(y)$.

\begin{algorithm}[t]
\caption{Fagin's Algorithm}\label{FaginAlgorithm}
\begin{algorithmic}[1]

\Statex \textbf{Input}: $\mathcal{Y}$, $x$, $K$, $L_1,\ldots,L_R$
\Statex \textbf{Output}: $S^K_x$
\Statex 

\State $S^K_x \leftarrow \emptyset$

\State bookkeeping[1..$M$] $\leftarrow0$
\State targetsToCheck $\leftarrow0$
\State numberOfTargetsInAllLists $\leftarrow0$

\While{numberOfTargetsInAllLists $< K$} \label{faginrandomstart}
	\For{$r\leftarrow1$ \textbf{to} $R$} 
		\State $y \leftarrow$ get next item from $L_r$
		\State targetsToCheck $\leftarrow$ targetsToCheck $\cup\ \{y\} $
		\State bookkeeping[$y$] $\leftarrow$ bookkeeping[$y$]+1
		\If{bookkeeping[$r$] $=R$}
			\State numberOfTargetsInAllLists $\leftarrow$ numberOfTargetsInAllLists $+ 1$
		\EndIf
	\EndFor
\EndWhile \label{faginrandomend}

\For{$y$ \textbf{in} targetsToCheck}
	\State score $\leftarrow s(x,y)$ \Comment{$\mathcal{O}(R)$}
	\If{lowest score in $S^K_x<$ score}
		\State update $S^K_x$ with scored item $y$  \Comment{$\mathcal{O}(\log K)$}
	\EndIf
\EndFor

\end{algorithmic}
\end{algorithm}

\begin{algorithm}[t]
\caption{Threshold Algorithm}\label{ThresholdAlgorithm}
\begin{algorithmic}[1]
\Statex \textbf{Input}: $\mathcal{Y}$, $x$, $K$, $L_1,\ldots,L_R$
\Statex \textbf{Output}: $S^K_x$
\Statex 

\State $S^K_x \leftarrow \emptyset$

\State  calculated $\leftarrow \emptyset$ 
\State  lowerBound $\leftarrow-\infty$
\State  upperBound $\leftarrow+\infty$

\While{lowerBound $<$ upperBound}
	\State  upperBound $\leftarrow0$
	\For{$r\leftarrow1$ \textbf{to} $R$}\label{sparselists}
		\State $y \leftarrow$ get next item from $L_r$
		\If{$y \not\in$ calculated}
			\State score $\leftarrow s(x,y)$ \Comment{$\mathcal{O}(R)$}
			\State calculated $\leftarrow$ calculated $\cup\ \{y\}$
	\If{lowerBound $<$ score}
		\State update $S^K_x$ with the new scored item  \Comment{$\mathcal{O}(\log K)$}
		\If{$|S^K_x| = K$}
			\State lowerBound $\leftarrow$ lowest score of a target in $S_x^K$ \Comment{$\mathcal{O}(1)$}
		\EndIf
	\EndIf
		\EndIf
		\State upperBound $\leftarrow$ upperBound + $u_r(x) \, t_r(y)$ \Comment{$\mathcal{O}(1)$}

	\EndFor
\EndWhile
\end{algorithmic}
\end{algorithm}

The pseudo-code of the threshold algorithm is given in Algorithm~\ref{ThresholdAlgorithm}. In contrast to Fagin's algorithm, this approach uses information of the query $x$ to put emphasis on the dimensions in $\vec{t}(y)$ that are relevant for $\vec{u}(x)$. The algorithm is therefore not divided in a random and a sorted access phase. In iteration $d$, it scores the targets observed at depth $d$ in the lists. It keeps popping elements from lists to obtain promising targets until a stopping criterion is reached, i.e., when the lowest score in the current top-$K$ set, the lower bound, exceeds an upper bound on the values of the scores of targets that have not been investigated yet. This is summarized in the following theorem. 

\begin{theorem}\label{ubtheorem}
Upon termination of the threshold algorithm, it has found the set $S_x^K$. 
\end{theorem}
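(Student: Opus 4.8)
The plan is to show that the stopping criterion certifies global optimality of the retained set: once \texttt{lowerBound} $\geq$ \texttt{upperBound}, no target lying outside the current $S_x^K$ can outscore any target inside it. First I would invoke the normalization already established in the text, so that without loss of generality $u_r(x) \geq 0$ for every $r$ and each list $L_r$ is sorted in non-increasing order of $t_r(y)$; consequently the score $s(x,y) = \sum_{r=1}^R u_r(x)\,t_r(y)$ from~(\ref{s_in_tu}) is non-decreasing in each coordinate $t_r(y)$. Throughout I would track the two quantities the algorithm maintains: the set \texttt{calculated} of targets whose full score has been evaluated, and the running value of \texttt{upperBound}, which at the end of the iteration at depth $d$ equals $\sum_{r=1}^R u_r(x)\, t_r(y_r^{(d)})$, where $y_r^{(d)}$ denotes the target popped from $L_r$ in that iteration (each list advances by exactly one per iteration, so the depth $d$ is common to all lists).

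The key lemma I would prove is the \emph{upper-bound property}: at the end of every iteration, every target $y \notin$ \texttt{calculated} satisfies $s(x,y) \leq$ \texttt{upperBound}. This is the heart of the argument. The decisive observation is that in the threshold algorithm every target read under sorted access is immediately scored and inserted into \texttt{calculated}; hence $y \notin$ \texttt{calculated} means $y$ has not yet been popped from any list, so by the sorting of the lists $t_r(y) \leq t_r(y_r^{(d)})$ for every $r$. Since each $u_r(x) \geq 0$, summing yields $s(x,y) \leq \sum_{r=1}^R u_r(x)\, t_r(y_r^{(d)}) =$ \texttt{upperBound}.

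Next I would record the complementary bookkeeping: the algorithm maintains $S_x^K$ as (one valid choice of) the $K$ highest-scoring targets among \texttt{calculated}, and once $|S_x^K| = K$ the value \texttt{lowerBound} equals $\min_{y \in S_x^K} s(x,y)$. Combining the two facts at termination, where \texttt{lowerBound} $\geq$ \texttt{upperBound}, gives the conclusion by a two-case comparison of an arbitrary $y' \notin S_x^K$ against the members of $S_x^K$. If $y'$ was never calculated, then $s(x,y') \leq$ \texttt{upperBound} $\leq$ \texttt{lowerBound} $= \min_{y \in S_x^K} s(x,y)$; if $y'$ is calculated, the maintenance of the top-$K$ set gives the same inequality directly. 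Hence no target outside $S_x^K$ outscores any target inside, which is exactly the optimality condition defining the solution of~(\ref{eq:prob_statement}).

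The main obstacle I anticipate is the upper-bound lemma, specifically justifying that ``not in \texttt{calculated}'' is equivalent to ``not yet popped from any list'' — this rests on the invariant that sorted access and scoring happen in lockstep, in contrast to Fagin's algorithm where an object may be seen in some lists long before its score is computed. A secondary subtlety is the guard \texttt{lowerBound} $<$ \texttt{score} together with ties: a target whose score merely equals the current \texttt{lowerBound} is not inserted, but since the problem statement explicitly permits a non-unique $S_x^K$, I would argue that the returned set is still \emph{a} valid top-$K$ set, because a skipped target can only tie the $K$-th retained score and \texttt{lowerBound} is non-decreasing over the run. Finally I would note termination briefly: \texttt{lowerBound} remains $-\infty$ until $K$ targets have been scored, and because the lists are finite, \texttt{calculated} eventually exhausts $\mathcal{Y}$, at which point the set of uncalculated targets is empty and the upper-bound property makes $S_x^K$ correct regardless of how the loop halts.
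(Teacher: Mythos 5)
Your proposal is correct and follows essentially the same route as the paper's proof: the core step in both is the upper-bound property, namely that any target not yet encountered under sorted access at depth $d$ has every coordinate $t_r(y)$ dominated by $t_r(y_{L_r(d)})$, so by monotonicity its score cannot exceed $\mathrm{upperBound}(d)$, and termination with $\mathrm{lowerBound} \geq \mathrm{upperBound}$ then certifies the retained set. Your version simply makes explicit several details the paper leaves implicit (the sign normalization, the lockstep between popping and scoring, tie handling, and termination), all of which are handled correctly.
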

\begin{proof}
In iteration $d$, let $y_{L_r(d)}$ be the target at position $d$ of list $L_r$, then the upper bound is given by:
\begin{equation}
\mathrm{upperBound}(d) = \sum_{r=1}^R u_r(x) \, t_r(y_{L_r(d)})\,. \label{upperboundTA}
\end{equation}
For a given query $x$, if an item $y$ has not occurred in any of the lists at depth $d$, its score $s(x,y)$ will not exceed $\mathrm{upperBound}(d)$. As the score can be calculated as an increasing function of the components of $t_r(y)$, it follows that $\mathrm{upperBound}(d)$ is greater than or equal to the score of any target not yet encountered. As a result, this is an upper bound on the scores of such targets, so it should be clear that the algorithm computes the set $S_x^K$ in a correct manner. \qed
\end{proof}

\begin{table}[htbp]
   \centering
   \topcaption{Top: Toy dataset for a given query $x$ with $R=4$, $K=1$ and $\mathbf{u}(x)=(0.1,  2.5,  1 ,  0.5)^\intercal$. The best item is no.\ 6 and it is indicated in boldface. Left: Fagin's algorithm applied to the toy dataset for finding the top-1 set. Item no.\ 5 has appeared in each list at depth five, so Fagin's algorithm will stop after five iterations. We score all encountered items: $\{2, 3, 4, 5, 6, 7, 8, 9, 10\}$. The top is calculated by scoring nine items instead of ten and item 6 is returned as best scoring item. Right: Applying the threshold algorithm to the toy example. In the second step the lower bound already exceeds the upper bound and the algorithm terminates. The top-scoring target is found while only five of the ten targets are scored.}\label{FaginExample}
	   \begin{tabular}{| c | cccc | c |} 
   \hline
item & $t_1(y)$ & $t_2(y)$ & $t_3(y)$ & $t_4(y)$ & $s(x, y)$\\
\hline
1 &-0.5 & -1.4 & -0.8 & -1.0  & -4.85\\
2 & 0.9 & -1.9 & -0.3 &  0.5 & -4.71\\
3 &-0.8 & -0.4 & -0.1 &  0.9 & -0.73\\
4 &-0.7 & -1.7 &  0.2 & -2.5 & -5.37\\
5 & 0.8 &  0.2 &  0.0  &  0.7 &  0.93\\
6 & 1.0  &  1.6 &  0.9 & -0.6 &  \textbf{4.7} \\
7 & 0.1 &  0.4 & -0.6 & -2.0  & -0.59\\
8 & -2.4 &  0.6 &  0.4 & -0.4 &  1.46\\
9 & -1.6 &  0.2 &  1.0  &  0.3 &  1.49\\
10 & 0.0  &  1.0  & -0.6 &  1.4 &  2.6 \\ 
\hline
   \end{tabular} \\
  \vspace{0.3cm} 
	\begin{tabular}{|c| cccc |} 
   \hline
step & $L_1$ & $L_2$ & $L_3$ & $L_4$\\
\hline
\rowcolor{Gray}
      1 & 6&  6&  9& 10\\
\rowcolor{Gray}    
   2& 2& 10&  6&  3\\
   \rowcolor{Gray}
      3&  \textbf{5}&  8&  8&  \textbf{5}\\
       \rowcolor{Gray}
       4&7&  7&  4&  2\\
        \rowcolor{Gray}
       5 &10&  \textbf{5}&  \textbf{5}&  9\\
       - & 1&  9&  3&  8\\
       -& 4&  3&  2&  6\\
       -& 3&  1&  7&  1\\
       -& 9&  4& 10&  7\\
       -& 8&  2&  1&  4\\
\hline
   \end{tabular} \hspace{0.2cm}
   \begin{tabular}{|c| c c c |} 
   \hline
step & to score & lowerBound & upperBound\\
\hline
  1 & $\{6, 9, 10\}$ & 4.7 &5.8 \\
   2 &  $\{2, 3\}$ & 4.7 & 3.94\\
    -  & &&\\
      - & &&\\
   - & &&\\
     - & &&\\
     - & &&\\
     - & &&\\
      - & &&\\
      - & &&\\
\hline
   \end{tabular}
   
   \label{ExampleThreshold}
\end{table}


{The algorithms are illustrated on a small toy example in Table~\ref{ExampleThreshold}. Here, a database of size $N=10$ is queried for the top-$1$ set using a model with $R=4$ components. The naive approach consists of scoring all ten items and withholding the one with the largest score. Fagin's algorithm scans the sorted lists for five steps, at that point item 5 has been encountered in each list, terminating the random access phase. All nine items seen in the random access phase are scored in the sorted access phase. The threshold algorithm terminates after two steps, when the lower bound has exceeded the upper bound. Five items are scored by this algorithm. All three methods return $\{6\}$ as the correct top-$1$ set, but they differ in the number of items that have been scored.}

In what follows we compare the computational complexity of Fagin's algorithm and the threshold algorithm from a multi-target prediction perspective. To this end, we consider the cost associated with the number of targets that have to be scored, and we derive a bound accordingly. In both algorithms, the $R$ lists of the features for the $M$ targets have to be sorted. Using conventional sorting algorithms, this can be done with a time complexity of $\mathcal{O}(RM\log M)$. If the targets remain unchanged, or are updated only slowly, this is an operation that has to be done only once. Consequently, the cost of sorting should not be included in the computational cost for computing $S_x^K$.

In Algorithm~\ref{FaginAlgorithm}, lines \ref{faginrandomstart} to \ref{faginrandomend} are concerned with finding the relevant targets to score. Suppose the lists have to be followed to a depth $D \leq M$, then this part has a time complexity of $\mathcal{O}(RD)$. The last part of Fagin's algorithm is identical to the naive algorithm, resulting in a time complexity of $\mathcal{O}(M_FR)$, because $M_F$ targets are scored (with $M_F \leq M$). For independent lists (i.e.\ the position of a given target is independent for the different lists), the number of targets $M_F$ to score is of the order $M^{\frac{R-1}{R}} K^\frac{1}{R}$. Following this simplifying assumption and again ignoring the cost of maintaining the current best top-$K$ list, we obtain a time complexity of $\mathcal{O}(RM^\frac{R-1}{R}K^\frac{1}{R})$. The time complexity of Fagin's algorithm is thus less than the time complexity of the naive algorithm, but the improvement is in practice rather small. More specifically, Fagin's algorithm will calculate the fewest scores when all components of the representation $\vec{t}(y)$ have a very strong (positive or negative) correlation. However, if those components are highly correlated, they likely share information and perhaps some effort should be done to reduce the number of components, e.g.\ by means of feature selection techniques. 


The complexity analysis of the threshold algorithm is rather simple: it only calculates $M_T$ scores with $M_T\leq M$, so its computational complexity is $\mathcal{O}(M_TR)$. It has been shown in \cite{Fagin2003} that the threshold algorithm is instance-optimal, meaning that the algorithm cannot be outperformed by any other algorithm when wild guesses are not allowed.

\begin{definition}
(Instance optimality) Let $\mathbf{A}$ be a class of algorithms, let $\mathbf{Y}$ be a class of target sets and let $\mathrm{cost}(\mathcal{A},\mathcal{Y},x)$ be the cost when running algorithm $\mathcal{A}$ on target set $\mathcal{Y}$ for query $x$. We say that an algorithm $\mathcal{B}$ is instance-optimal over $\mathbf{A}$ and $\mathbf{Y}$ if $\mathcal{B} \in \mathbf{A}$ and if for every $\mathcal{A} \in \mathbf{A}$, every $\mathcal{Y} \in \mathbf{Y}$ and every $x$ in $\mathcal{X}$ it holds that
\begin{eqnarray*}
\mathrm{cost}(\mathcal{B},\mathcal{Y},x) = \mathcal{O}(\mathrm{cost}(\mathcal{A},\mathcal{Y},x))\,.
\end{eqnarray*}  
The above equation means that there exist constants $c$ and $c'$ such that $\mathrm{cost}(\mathcal{B},\mathcal{Y},x) \leq c \times \mathrm{cost}(\mathcal{A},\mathcal{Y},x) + c'$ for every choice of $x$, $\mathcal{Y} \in \mathbf{Y}$ and $\mathcal{A} \in \mathbf{A}$.
\end{definition}

In other words, if an algorithm $\mathcal{B}$ is instance-optimal, then for any query $x$ no algorithm that obtains a lower time complexity exists. However, it can still be the case that another instance-optimal algorithm will need less computations than $\mathcal{B}$, but the difference would then be attributed to a constant time factor. This type of optimality is in fact much stronger than optimality in the average case or worst case: it holds for every query. We can easily show that the threshold algorithm is instance-optimal for the problem that is the main interest of this paper. To this end, we have to make a restriction to algorithms that do not make wild guesses. 

\begin{definition}
(Wild guess) An algorithm for solving (\ref{eq:prob_statement}) is said to make a wild guess if it computes for a given target $y$ the score $s(x,y)$ before $y$ has been observed in any of the lists $L_1,...,L_R$. 
\end{definition}

\begin{theorem}\label{instance_optimal}
Let $\mathbf{Y}$ be the class of all possible target sets $\mathcal{Y}$. Let $\mathbf{A}$ be the class of algorithms that solve for any query $x$ problem (\ref{eq:prob_statement}) in an exact manner without making wild guesses. Then the threshold algorithm is instance-optimal over $\mathbf{A}$ and $\mathbf{Y}$.  
\end{theorem}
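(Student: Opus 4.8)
The plan is to follow the classical instance-optimality argument for the threshold algorithm, adapting it to the cost model of this paper in which the cost is the number of targets that get scored. Fix an arbitrary $\mathcal{A}\in\mathbf{A}$, an arbitrary target set $\mathcal{Y}\in\mathbf{Y}$ and an arbitrary query $x$. As in the discussion preceding Theorem~\ref{ubtheorem}, I may assume without loss of generality that every $u_r(x)\ge 0$, that $s(x,y)$ is increasing in each $t_r(y)$, and that each list $L_r$ is sorted by decreasing contribution $u_r(x)\,t_r(y)$. Let $d$ be the depth at which the threshold algorithm halts on $(\mathcal{Y},x)$; since it scores at most one new target per list per depth, it scores at most $Rd$ targets, so its cost is $\mathcal{O}(Rd)$. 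Because $R$ is a fixed constant of the model, it therefore suffices to show that $\mathrm{cost}(\mathcal{A},\mathcal{Y},x)=\Omega(d)$; instance optimality with constants $c=\mathcal{O}(R)$ and $c'=\mathcal{O}(R)$ then follows directly from the definition.

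The engine of the proof is an adversary lemma that exploits the no-wild-guess restriction. For each list $r$, let $a_r$ be the deepest position that $\mathcal{A}$ reaches by sorted access, and let $\Theta_{\mathcal{A}}=\sum_{r=1}^{R}u_r(x)\,t_r(y_{L_r(a_r)})$ be the score of the virtual target sitting on $\mathcal{A}$'s frontier. Because $\mathcal{A}$ makes no wild guesses, it can only have scored targets that it has already encountered in some list, so every target it has touched lies at depth $\le a_r$ in list $L_r$ for some $r$. I claim that if $\mathcal{A}$ halts while leaving at least one target $y_0$ entirely unseen, then correctness forces $\Theta_{\mathcal{A}}\le s_K$, where $s_K$ denotes the true $K$-th largest score. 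The argument is a promotion construction: build a second target set $\mathcal{Y}'$ that agrees with $\mathcal{Y}$ on every grade that $\mathcal{A}$ actually observed, but in which the unseen target $y_0$ has its hidden grades raised to the frontier values $t_r(y_{L_r(a_r)})$. Monotonicity of $s$ then gives $s(x,y_0)=\Theta_{\mathcal{A}}$ in $\mathcal{Y}'$, while the grades in positions $1,\ldots,a_r$ of each list are untouched (the promoted $y_0$ can be placed immediately behind the frontier, tolerating ties), so $\mathcal{A}$ produces an \emph{identical} execution transcript and hence the \emph{same} output $S$ on $\mathcal{Y}$ and on $\mathcal{Y}'$. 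If $\Theta_{\mathcal{A}}>s_K$, then $y_0$ outranks the weakest member of $S$ in $\mathcal{Y}'$ and must belong to the top-$K$ set of $\mathcal{Y}'$, so $S$ is the wrong answer there, contradicting $\mathcal{A}\in\mathbf{A}$. This step is exactly where the two standing assumptions of the theorem are used: the no-wild-guess restriction guarantees that the hidden grades of $y_0$ are genuinely unknown to $\mathcal{A}$ and hence free to be reset, and the choice of $\mathbf{Y}$ as the class of \emph{all} target sets guarantees that the perturbed $\mathcal{Y}'$ is itself a legitimate instance on which $\mathcal{A}$ must be correct.

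With the lemma in hand I would close the argument by comparing frontiers. Since the threshold $\tau_j=\sum_r u_r(x)\,t_r(y_{L_r(j)})$ is non-increasing in the depth $j$ and the threshold algorithm does not halt before depth $d$, its frontier threshold is still above $s_K$ just before depth $d$; by the lemma, $\mathcal{A}$ cannot halt until its own frontier threshold $\Theta_{\mathcal{A}}$ has been driven down to $s_K$ as well, which requires $\mathcal{A}$ to advance at least one list to a depth comparable to $d$ and therefore to score $\Omega(d)$ targets. The remaining corner case, in which $\mathcal{A}$ leaves no target unseen, is immediate: then $\mathcal{A}$ scores all $M$ targets, and the threshold algorithm's cost is at most $RM=\mathcal{O}(\mathrm{cost}(\mathcal{A},\mathcal{Y},x))$ trivially. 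Combining the two cases yields $\mathrm{cost}(\mathcal{B},\mathcal{Y},x)=\mathcal{O}(\mathrm{cost}(\mathcal{A},\mathcal{Y},x))$ for all $\mathcal{A}$, $\mathcal{Y}$ and $x$, which is the assertion.

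The hard part will be the bookkeeping that turns the clean frontier inequality $\Theta_{\mathcal{A}}\le s_K$ into a genuine lower bound of the form $\mathrm{cost}(\mathcal{A},\mathcal{Y},x)\ge c\,d$. The difficulty is that the threshold algorithm stops only once it has both pushed its uniform-depth frontier threshold to $s_K$ \emph{and} collected $K$ confirmed targets, whereas $\mathcal{A}$ is free to distribute its sorted accesses unevenly across the $R$ lists and to confirm the top-$K$ by a different route; reconciling the per-list frontiers $a_r$ of $\mathcal{A}$ with the single halting depth $d$ of the threshold algorithm is what produces the constant factor $R$ in the optimality ratio, and it must be carried out while carefully handling ties in the scores and the (possibly non-unique) top-$K$ set $S_x^K$.
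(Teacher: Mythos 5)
Be aware that the paper does not actually prove Theorem~\ref{instance_optimal} from first principles: its entire ``proof'' is the observation that, after rescaling $z_r = u_r(x)t_r(y)$ into $[0,1]$, problem (\ref{eq:prob_statement}) becomes maximisation of a monotone aggregation operator, so Theorem~6.1 of \cite{Fagin2003} applies verbatim. Your proposal instead reconstructs the internals of that cited theorem, and the engine you choose --- the promotion/adversary lemma showing that a correct, no-wild-guess algorithm $\mathcal{A}$ that leaves some target unseen must have driven its frontier sum $\Theta_{\mathcal{A}}$ down to at most $s_K$, because otherwise one could build a second target set $\mathcal{Y}'\in\mathbf{Y}$ on which $\mathcal{A}$ has an identical transcript but a wrong output --- is exactly the right one, and you correctly identify where the two hypotheses (no wild guesses, $\mathbf{Y}$ containing \emph{all} target sets) enter. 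So the content is a superset of what the paper supplies; the comparison to make is simply ``self-contained reconstruction'' versus ``reduction to a cited result.''

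There is, however, a real problem in your closing step, and it is precisely the step you flag as ``the hard part.'' You assert that the threshold algorithm's ``frontier threshold is still above $s_K$ just before depth $d$'' and then try to force $\mathcal{A}$ down to a depth comparable to $d$. Neither half is sound: the threshold algorithm halts when $\tau_j \le \mathrm{lowerBound}(j)$, and $\mathrm{lowerBound}(j)$ can be strictly below $s_K$ until the algorithm has actually encountered all of the true top-$K$, so $\tau_{d-1} > \mathrm{lowerBound}(d-1)$ does not give $\tau_{d-1} > s_K$; and even granting $\Theta_{\mathcal{A}} \le s_K$, the uniform depth at which $\tau_j$ first drops to $s_K$ may be much smaller than $d$, so $\mathcal{A}$ is not forced anywhere near depth $d$. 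The standard (and correct) argument runs in the \emph{opposite} direction: let $a=\max_r a_r$ be the deepest sorted access of $\mathcal{A}$; since $\mathcal{A}$ makes no wild guesses, every member of its (correct) output appears in the top $a$ positions of some list, hence is seen by the threshold algorithm by depth $a$, and $\tau_{a+1}\le\Theta_{\mathcal{A}}\le s_K$ by your lemma, so the threshold algorithm's stopping condition holds by depth $a+1$ and $d\le a+1$. One then lower-bounds $\mathrm{cost}(\mathcal{A},\mathcal{Y},x)$ by (a constant times) $a$, which is where the ratio $\mathcal{O}(R+R^2)$ of \cite{Fagin2003} comes from. Note that this last step is where the paper's informal cost model (``number of targets scored'' rather than Fagin's ``number of accesses'') needs care: an algorithm could in principle traverse lists deeply while scoring few targets, so you must either charge for sorted accesses as \cite{Fagin2003} does or argue that under the no-wild-guess restriction useful depth cannot be decoupled from scoring. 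As written, your final paragraph would not survive this scrutiny; with the direction of the frontier comparison reversed, the proof goes through.
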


This result follows immediately from Theorem 6.1 in \cite{Fagin2003} by transforming problem statement (\ref{eq:prob_statement}) to the original problem setting of Fagin, as discussed more formally above, and observing that this leads to searching for the maximum of a monotonic aggregation operator. Crucial in the above theorem is that we make a restriction to algorithms that do not make wild guesses. Due to lucky shots, algorithms that make wild guesses would be able to find the top-$K$ set more rapidly, but they would not outperform a deterministic algorithm for general queries $x$. As we focus in this article on deterministic algorithms, we omit further details of the comparison with stochastic methods, but the analysis becomes much more delicate if one would like to extend the above theorem beyond the class of deterministic algorithms. We refer to \cite{Fagin2003} for more details. In the next statement, we formally show that Fagin's algorithm cannot be instance-optimal. 

\begin{theorem}\label{not_instance_optimal}
Let $\mathbf{Y}$ be the class of all possible target sets $\mathcal{Y}$. Let $\mathbf{A}$ be the class of algorithms that solve for any query $x$ problem (\ref{eq:prob_statement}) in an exact manner without making wild guesses. Then Fagin's algorithm is not instance-optimal over $\mathbf{A}$ and $\mathbf{Y}$.  
\end{theorem}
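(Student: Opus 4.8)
The plan is to refute instance-optimality by exhibiting a single family of instances on which Fagin's algorithm is forced to score asymptotically more targets than a competing algorithm that already lies in the class $\mathbf{A}$. The natural competitor is the threshold algorithm itself: by Theorem~\ref{instance_optimal} it solves (\ref{eq:prob_statement}) exactly without wild guesses, so it belongs to $\mathbf{A}$, and (as in the complexity discussion above) its cost is exactly the number $M_T$ of scored targets while Fagin's cost is $M_F$. Since instance-optimality of Fagin's algorithm would require fixed constants $c,c'$ with $M_F \le c\,M_T + c'$ for \emph{every} query and every target set, it suffices to construct, for each $M$, a target set $\mathcal{Y}_M$ (with $K=1$ fixed) and a query $x$ for which $M_T$ stays bounded while $M_F$ grows linearly in $M$.

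First I would fix $R=2$ and build the two sorted lists so that they are exact reverses of one another. Concretely, let target $i\in\{1,\dots,M\}$ carry features $t_1(i)=M-i$ and $t_2(i)=i-1$, so that $L_1$ visits the targets in the order $1,2,\dots,M$ while $L_2$ visits them in the order $M,M-1,\dots,1$. I would then choose a query $\vec{u}(x)=(1,\delta)^\intercal$ whose first weight strongly dominates, taking $\delta=\delta(M)$ small, for instance $\delta=M^{-2}$, so that the score $s(x,i)=(M-i)+\delta(i-1)$ is strictly decreasing in $i$ and target $1$ is the unique best target.

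Next I would analyse the two algorithms on this instance. For the threshold algorithm, after reading the heads of both lists it scores targets $1$ and $M$, sets $\mathrm{lowerBound}=s(x,1)=M-1$, and by (\ref{upperboundTA}) obtains $\mathrm{upperBound}=(M-1)(1+\delta)$; one further iteration lowers the bound to $(M-2)(1+\delta)$, which already falls below $M-1$ precisely because $\delta<1/(M-2)$, so the stopping criterion fires and $M_T=4$, a constant independent of $M$. For Fagin's algorithm I would instead exploit the fact that its random-access phase ignores $\vec{u}(x)$ and halts only once some target has surfaced in \emph{both} lists. Because the lists are reverses, target $j$ occurs in $L_1$ only at depth $\ge j$ and in $L_2$ only at depth $\ge M-j+1$, so no common target exists before depth $\lceil (M+1)/2\rceil$; at that depth the union of the two scanned prefixes already covers every target, forcing $M_F=M$ (the same cost as the naive algorithm).

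The conclusion is then immediate: on the family $\{(\mathcal{Y}_M,x)\}_M$ we have $M_F=M$ while $M_T=O(1)$, so $M_F/M_T\to\infty$ and no constants $c,c'$ can satisfy $M_F\le c\,M_T+c'$ for all $M$. Hence Fagin's cost is not $\mathcal{O}(\mathrm{cost}(\text{threshold}))$, and since the threshold algorithm lies in $\mathbf{A}$, Fagin's algorithm is not instance-optimal over $\mathbf{A}$ and $\mathbf{Y}$. The one delicate point is the coupling between $\delta$ and $M$: $\delta$ must be small enough relative to $1/M$ to collapse the threshold algorithm's upper bound after a bounded number of iterations, yet the instance must remain a legitimate member of $\mathbf{Y}$ with a well-defined, unique top-$1$ target so that both algorithms are unambiguously correct. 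Verifying that $\delta=M^{-2}$ simultaneously secures strict monotonicity of the scores and early termination of the threshold algorithm is the main thing I expect to have to check carefully.
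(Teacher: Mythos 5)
Your proof is correct and follows essentially the same route as the paper's: a two-dimensional instance whose two sorted lists are exact reverses of one another, so that Fagin's query-independent random-access phase must descend to depth roughly $M/2$ and score $\Theta(M)$ targets, while the threshold algorithm's lower and upper bounds cross after two steps, yielding the same $\mathcal{O}(M)$ versus $\mathcal{O}(1)$ contradiction with instance-optimality. The only difference is cosmetic: the paper takes $\vec{u}(x)=(1,1)$ with a tied middle block of targets (remarking that ties can be removed by a more complicated example), whereas your linear features together with the $M$-dependent weight $\delta=M^{-2}$ realize exactly that tie-free refinement.
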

\begin{proof}
To show that Fagin's algorithm is not instance-optimal, it suffices to prove that it has a larger time complexity than the threshold algorithm for one particular query $x$. Adopting the same notation as in Table~1, let us assume that $
\vec{u}(x) = (1,1)$ with $R=2$ and let us define the lists $L_1$ and $L_2$ as in Table~\ref{tab:counterexample}. We consider in this example a dataset that contains $M$ targets and we assume that the values of $t_1(y)$ increase with the indices $\{1,\ldots M\}$ of the items, while the values of $t_1(y)$ increase with the indices. With the construction of $t_1(y)$ and $t_2(y)$ as in the table, Fagin's algorithm needs $M/2$ steps to terminate, whereas the threshold algorithm only needs two steps, independent of $M$. So, the former has for this artificial dataset a time complexity of $\mathcal{O}(M)$, whereas the latter has a complexity of $\mathcal{O}(1)$. As a result, Fagin's algorithm cannot be instance-optimal.  
\qed
\end{proof}

\begin{table}[tbp]
   \centering
   \topcaption{Hypothetical dataset used to show that Fagin's algorithm is not instance-optimal for $R=2$. Top: a dataset containing $M$ targets and the scores when $\vec{u}(x) = (1,1)$. Bottom: an overview of the steps needed to find the top-$1$ set for Fagin's algorithm and the threshold algorithm. The former needs $M/2$ steps to terminate, whereas the latter only needs two steps, independent of $M$. We assume in this example that the values of $t_1(y)$ increase with the indices $\{1,\ldots M\}$ of the items, while the values of $t_1(y)$ increase with the indices. For simplicity, ties occur among targets $2,3,...,M-1$. Ties can be easily removed by constructing a more complicated example. }

   \begin{tabular}{| c | cc | c |} 
   \hline
item & $t_1(y)$ & $t_2(y)$ & $s(x, y)$\\
\hline
1 & 1.1  &  0.1  & 1.1  \\ 
2 & 0.5 & 0.5 & 1.0\\
3 & 0.5 & 0.5 & 1.0\\
\vdots & \vdots & \vdots & \vdots\\
$M-1$ & 0.5 & 0.5 & 1.0\\
$M$ & 0.1  &  1.0  & 1.1  \\ 
\hline
   \end{tabular} \\ \vspace{0.3cm}
	\begin{tabular}{|c| cc |} 
   \hline
step & $L_1$ & $L_2$ \\
\hline
\rowcolor{Gray}
      1 & 1&  $\mathbf{M}$ \\
\rowcolor{Gray}    
      2 & 2&  M-1 \\
   \rowcolor{Gray}
            3 & 3&  M-2 \\
       \rowcolor{Gray}
			      \vdots & \vdots &  \vdots \\
   \rowcolor{Gray}
            $M/2$ & $M/2$ & $M/2$ \\
            $M/2 + 1$ & $M/2 + 1$ & $M/2 - 1$ \\
						 \vdots & \vdots &  \vdots \\
						 $M$ & $\mathbf{M}$ &  1 \\
\hline
   \end{tabular} \hspace{0.2cm}
   \begin{tabular}{|c| c c c |} 
   \hline
step & to score & lowerBound & upperBound\\
\hline
  1 & $\{1,M \}$ & 1.1 &2.0 \\
   2 &  $\{2, M-1 \}$ & 1.1 & 1.0\\
    -  & &&\\
   \vdots & &&\\
     - & &&\\
		    - & &&\\
     \vdots & &&\\
      - & &&\\
\hline
   \end{tabular}
	
	\label{tab:counterexample}
	\end{table}

The above theorem confirms that datasets can be found where Fagin's algorithm will suffer from a higher time complexity than the threshold algorithm. Conversely, due to the instance-optimality of the threshold algorithm, one cannot find datasets where this threshold algorithm will exhibit a higher time complexity than Fagin's algorithm. The next theorem supports a related, but different claim.  

\begin{theorem}
For any query $x$ and for any possible target set $\mathcal{Y}$, the threshold algorithm never computes more scores $s(x,y)$ than Fagin's algorithm.   
\end{theorem}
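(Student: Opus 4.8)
The plan is to reduce the statement to a comparison of the depths to which the two algorithms descend the sorted lists. First I would observe that both algorithms traverse the same $R$ sorted lists $L_1,\dots,L_R$ in lockstep, consuming one position from each list per iteration, and that each one ultimately scores \emph{exactly} the set of distinct targets occurring somewhere within the first $D$ positions of the lists, where $D$ is the depth at which it halts. For Fagin's algorithm this set is \texttt{targetsToCheck}, assembled during the random-access phase down to a depth $D_F$ and scored wholesale afterwards; for the threshold algorithm it is \texttt{calculated}, built incrementally down to a depth $D_T$. Both halt only after completing a full row, so the two depths are comparable. Consequently, if $D_T \le D_F$, the targets scored by the threshold algorithm form a subset of those scored by Fagin's algorithm, and $M_T \le M_F$ follows at once. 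The whole argument therefore hinges on proving $D_T \le D_F$.

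To bound $D_T$, I would invoke the sign normalisation already justified before Algorithm~\ref{FaginAlgorithm}: without loss of generality $u_r(x) \ge 0$ for every $r$ and each list $L_r$ is sorted in decreasing order of $t_r(y)$, so that $s(x,y)$ is increasing in every coordinate. Now fix the depth $D_F$ at which Fagin halts. By its stopping rule there are $K$ distinct targets $y^{(1)},\dots,y^{(K)}$ that have appeared in \emph{all} $R$ lists by depth $D_F$; hence each $y^{(j)}$ occupies a position at most $D_F$ in every $L_r$. By sortedness this yields $t_r(y^{(j)}) \ge t_r(y_{L_r(D_F)})$ for every $r$, and summing against the nonnegative weights $u_r(x)$ gives
\begin{equation*}
s(x,y^{(j)}) \;=\; \sum_{r=1}^R u_r(x)\, t_r(y^{(j)}) \;\ge\; \sum_{r=1}^R u_r(x)\, t_r(y_{L_r(D_F)}) \;=\; \mathrm{upperBound}(D_F),
\end{equation*}
for each $j = 1,\dots,K$, the right-hand side being the quantity of equation~(\ref{upperboundTA}).

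The final step feeds this into the threshold algorithm's stopping criterion. All of $y^{(1)},\dots,y^{(K)}$ lie within the first $D_F$ positions, so were the threshold algorithm to run down to depth $D_F$ it would already have scored these $K$ distinct targets; its current top-$K$ set would then be full, and its \texttt{lowerBound}, being the $K$-th largest score encountered, would be at least the smallest of the $s(x,y^{(j)})$, hence at least $\mathrm{upperBound}(D_F)$. At that moment the loop guard \texttt{lowerBound} $<$ \texttt{upperBound} fails, so the threshold algorithm cannot run beyond depth $D_F$; that is, $D_T \le D_F$, which combined with the subset containment of the first paragraph closes the argument.

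I expect the only delicate point to be making the bookkeeping of ``each algorithm scores precisely the distinct targets in the first $D$ positions'' fully rigorous, and in particular confirming that the threshold algorithm's \texttt{lowerBound} is well-defined (the top-$K$ set is full) by depth $D_F$ — which holds because the $K$ complete targets are distinct and all scored by then. The monotonicity-plus-sortedness inequality and the final subset containment are routine once the depth comparison is in place; the conceptual core is simply recognising that Fagin's completeness condition at depth $D_F$ forces the threshold bound to be dominated there.
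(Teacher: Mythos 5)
Your proof is correct and follows essentially the same route as the paper's: both hinge on the observation that any target that is complete in all $R$ lists at Fagin's stopping depth dominates the threshold algorithm's upper bound coordinatewise, forcing the threshold algorithm to halt no later. The paper phrases this as a contradiction for $K=1$ and declares the extension to $K>1$ immediate, whereas you argue directly for general $K$ and spell out the subset containment of scored targets; these are presentational rather than substantive differences.
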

\begin{proof}
For simplicity we give a proof by contradiction for the case $K=1$ (top-1). An extension for the case $K>1$ is then immediate. Let us assume that Fagin's algorithm terminates at depth $d$ in the lists $L_i$. Suppose also that the threshold algorithm needs to compute more scores than Fagin's algorithm, implying that it terminates at a depth larger than $d$. At depth $d$ an upper bound on the score of targets that have not been observed yet is given by Eq.\ (\ref{upperboundTA}). At depth $d$, a lower bound on the highest score is given by the target that has been observed in all lists $L_i$ when Fagin's algorithm reaches this depth. We know that at depth $d$ this lower bound exceeds the upper bound in Eq.\ (\ref{upperboundTA}). As a result, the threshold algorithm should also stop at this depth, so this is a contradiction. \qed
\end{proof}

The threshold algorithm will never compute scores for more targets than Fagin's algorithm, but this does not imply that it will always need less computations. This is due to the fact that it is characterized by some additional overhead for computing the lower and upper bound in each iteration, and for verifying whether the first bound exceeds the latter. This additional overhead does not influence the time complexity of the threshold algorithm, but it explains why constants need to be considered in the definition of instance-optimality. 
  
In contrast to Fagin's algorithm, the threshold algorithm does not require a large buffer size, as only the upper and lower bounds are needed to scan the lists. Consequently, the buffer size of the threshold algorithm is as a result bounded, while this is not the case for Fagin's algorithm. The latter will hence suffer from a too large memory consumption, making it a practically less useful method, especially for high-dimensional problems. Sorting the lists can be done offline and parallel extensions can be easily implemented~\cite{Fan2014}. If speed predominates over accuracy, the threshold algorithm can be halted before the stopping criterion is reached, so that the top-$K$ set is potentially not correct, but the running time is shorter. For example, if the threshold algorithm would be terminated in the first step of the example above, we would have obtained the correct top by scoring only three targets. This modification is known as the halted threshold algorithm~\cite{Fagin2003}. Later on in the experiments we show that this heuristic could yield quite satisfactory results. 

{The threshold algorithm can also elegantly deal with sparse data. If $\vec{u}(x)$ and $\vec{t}(y)$ are large, sparse, non-negative feature vectors (e.g. in memory-based collaborative filtering or when dealing with bag-of-word features), only the non-zero elements and corresponding pointers have to be stored. By using sparse vector-vector multiplication implementations, the calculation of a single score can be reduced. The sorted lists $L_r$ only need to contain the pointers for items where $t_r(y)$ is positive. Furthermore, one only has to consider lists corresponding to non-zero elements of $\vec{u}(x)$ (Algorithm \ref{ThresholdAlgorithm}, line \ref{sparselists}). Thus, in addition to improving the computing time for calculating a single score (which can also be done using the naive algorithm), much fewer items have to be scored. We will demonstrate this in the experimental section.}

{Despite the strong theoretical result above, we show that it is still possible to slightly improve the threshold algorithm in some cases. The reason is that Theorem~\ref{instance_optimal} does not take the dimensionality $R$ of feature vectors into account.  We propose a small modification so that not all scores have to be calculated completely. Let us assume that the following double inequality holds for the score of a newly-observed target $y$ at depth $d$ of the threshold algorithm:
\begin{equation}
s(x,y) \leq \sum_{r=1}^{l} u_r(x) \, t_r(y) + \sum_{r=l+1}^R u_r(x) \, t_r(y_{L_r(d)}) \leq \mathrm{lowerBound}(d)\,,  \label{partialineq}
\end{equation}
with $1\leq l < R$. For such a target it is not needed to calculate the score entirely. Algorithm~\ref{PartialThresholdAlgorithm} modifies the scoring functionality of the original threshold algorithm by applying this idea. We will refer to this modification as the partial threshold algorithm. It stops the calculation when it becomes clear that the score cannot improve the lower bound. We start from the upper bound and gradually update the score to either obtain the full score or to halt when the partially calculated score is lower than the lower bound. The partial threshold algorithm will return the same top-$K$ set and is still an exact algorithm. The number of items that are considered is the same as with the threshold algorithm, only the calculation of some scores that do not improve the top will be halted early, resulting in a decrease in computations.
}

\begin{algorithm}[t]
\caption{Calculating the scores for the Partial Threshold Algorithm}\label{PartialThresholdAlgorithm}
\begin{algorithmic}[1]
\Statex \textbf{Input}: $\mathbf{u}(x)$, $\mathbf{t}(y)$, upperBound, lowerBound, $L_1,\ldots,L_R$, $d$
\Statex \textbf{Output}: score or fail
\Statex 

\State score $\leftarrow$ upperBound
\For{$r\leftarrow1$ \textbf{to} $R$}
	\State score $\leftarrow$ score -  $u_r(x) \, t_r(y_{L_r(d)})$
	\State score $\leftarrow$ score +  $u_r(x) \, t_r(y)$
	\If{ score $\leq$ lowerBound}
	\State \textbf{break} and return a fail \Comment{item $y$ will not improve $S^K_x$}
	\EndIf
\EndFor

\end{algorithmic}
\end{algorithm}

\section{Application domains and relationships with SEP-LR models}\label{SLRapplications}
{
In this section we illustrate that many multi-target prediction methods are specific instantiations of SEP-LR models and the inference methods discussed in the previous sections. The methods that we consider in Section~\ref{ColFilt} are frequently encountered in the area of recommender systems. The methods that we discuss in Sections~\ref{MultiOutputModels} and~\ref{pairwise_network} are used in other domains as well.

{For the application domains discussed below, many of the currently used techniques boil down to SEP-LR models. When that is the case, the algorithms of Section~\ref{queryAlgorithms} can be applied to decrease the running time at no cost in predictive performance, as we can guarantee that the correct top-$K$ predictions will be returned. When a non-SEP-LR model is used, it may be considered to switch to a SEP-LR model to perform queries potentially faster. This may lead to a decreased performance as one is restricted to linear models. Depending on the application this trade-off may be worth considering or not.}
}
\subsection{Memory-based and model-based collaborative filtering}\label{ColFilt}

In this section we discuss how our methodology can be applied to so-called user-based and item-based collaborative filtering. Since these methods directly process the available dataset, they are often denoted as memory-based collaborative filtering. To keep the discussion accessible, let us focus on the latter of the two settings, where recommendations are made by retrieving items that are similar to the items a user has seen before -- see e.g.~\cite{Sarwar2001}. A popular similarity measure in this area is the cosine similarity:
\begin{equation}
\cos(x,y) = \frac{\langle \vec{x}, \vec{y}\rangle}{\| \vec{x} \|_2  \| \vec{y} \|_2}\, , \label{cosine_similarity}
\end{equation}
in which $x$ should be interpreted as an item that has been seen before, whereas $y$ rather refers to an item that could be recommended. The vectors $
\vec{x}$ and $\vec{y}$ here typically consist of ratings or zero-one purchase flags for different users. Using this notation, item-based collaborative filtering can be interpreted as a specific case of model (\ref{s_in_tu}) by defining 
\begin{equation}
\vec{u}(x) = \frac{\vec{x}}{\| \vec{x} \|_2}\, \quad \mbox{and} \quad \vec{t}(y) = \frac{\vec{y}}{\| \vec{y} \|_2}\,.
\end{equation}
A similar reasoning can be followed for user-based collaborative filtering methods and other similarity measures such as the Pearson correlation and the adjusted cosine similarity. Both measures can also be written as dot products, but sparsity will be lost due to centring.

{In contrast to the memory-based approach, collaborative filtering can also be performed by building a model to explain the preference of a user for a particular target. This is referred to as model-based collaborative filtering and includes models such as Bayesian networks, clustering and matrix factorization methods~\cite{Lee2012}. Here, each target and each user are associated with a vector of $R$ latent variables of which the dot product represents the joint preference. A matrix $\mathbf{C}$ containing the ratings for each combination of objects $x$ and $y$ is approximated as a product of two low-rank matrices:
$\mathbf{C} \approx \mathbf{U} \mathbf{T}$. This means that the predicted scores in model (\ref{s_in_tu}) can be written in matrixform as $\mathbf{S} = \mathbf{U} \mathbf{T}$, such that the score $s(x,y)$ of the $i$-th user and $j$-th item is computed as a dot-product of the $i$-th row in $\mathbf{U}$ and the $j$-th column in $\mathbf{T}$.  As a result, the connection with SEP-LR models holds for basically all matrix decomposition algorithms, inclucing non-negative matrix factorization, independent component analysis, sparse principal component analysis and singular value decomposition methods~\cite{hastie01statisticallearning}. 
}

\subsection{Multi-label classification, multivariate regression and multi-task learning}\label{MultiOutputModels}

In this section we discuss applications of SEP-LR models in three large subfields of machine learning that have a lot of commonalities, namely multi-label classification, multivariate regression and multi-task learning. The former two subfields consider the simultaneous prediction of multiple binary or real-valued targets, respectively~\cite{Dembczynski2012,Tsoumakas2007}. Multi-task learning then further unifies those subfields, and further extends them to problems where not all targets are observed (or relevant) for all instances~\cite{Ben-David2003,Caruana1997}. However, all multi-label classification and multivariate regression problems can also be solved with multi-task learning methods. {Of course, finding the top-$K$ set is only relevant when an ordering of the targets is appropriate. For example, consider a protein-ligand interaction prediction problem where the goal is to find for a given ligand a set of proteins from a large database that are likely to show affinity towards it. If the model that is used predicts binding energy, the problem is a multivariate regression problem. When the model returns a value that is related to the probability that a given ligand will interact with a particular protein or not, we are in a multi-label classification setting.} 

Using the notation introduced before, this means that $x$ represents an instance and $y$ a target (a.k.a.\ label or output). As we will be computing the $K$ most relevant targets for a given instance $x$, we end up with a so-called label ranking setting. Furthermore, suppose we have $M$ targets and a set of $N_{\text{tr}}$ training instances for which a feature representation $\boldsymbol{\psi}(x)$ is available.  

Many multi-label classification, multivariate regression and multi-task learning methods can be formulated as SEP-LR models. Due to lack of space, we only discuss a few representative cases. The first approach is a very simple approach: it defines independent models for different targets. This method is known as the binary relevance method in the multi-label classification community~\cite{Tsoumakas2007}. Using linear models or linear basis function models, this implies that the model for target $y$ can be represented as 
$$s(x,y) = \vec{w}_y^\intercal \boldsymbol{\psi}(x) \,.$$
Hence, this model is a specific case of the SEP-LR framework, with $\vec{u}(x) =\boldsymbol{\psi}(x)$ and $\vec{t}(y) = \vec{w}_y$.

Multivariate (kernel) ridge regression is probably the most basic model of this kind. {Ridge regression constructs a model for every target separately, ignoring potential dependencies between targets. Such dependencies can be modelled using more sophisticated methods. In multivariate regression, dependencies between targets are often modelled using (kernel) PCA, which transforms the problem into a set of uncorrelated tasks (see e.g.~\cite{Weston2006}). An alternative but related approach consists of using a specialised loss function or regularisation term that enforces models of different targets to behave more similar, e.g.~\cite{Evgeniou2005,Evgeniou2004,Jalali2010}. An interesting example from this point of view is the partial least squares (PLS) algorithm, which projects the features onto a lower-dimensional subspace~\cite{Shawe-Taylor2004}. The inference algorithms discussed in this work are applicable to all these methods.
}

\subsection{Pairwise learning, content-based filtering and supervised network inference}\label{pairwise_network}

Finally, there are applications where feature representations are available for both $x$ and $y$, resulting in separable linear relational models of the following type:

\begin{equation*}
s(x,y) = \vectorization(\mathbf{W})^\intercal(\boldsymbol{\phi}(y)\otimes \boldsymbol{\psi}(x)) = \boldsymbol{\psi}(x)^\intercal \mathbf{W}\boldsymbol{\phi}(y) \label{relationalVecMat} \,.
\end{equation*}

Here, $\vectorization$ is the vectorisation operator, which stacks the columns of a matrix into a vector and $\otimes$ is the Kronecker product. The second part of the equation is used to stress that this can be seen as a standard linear model to make predictions for a pair of objects. This model can be cast in the form of Eq.\ (\ref{s_in_tu}) by writing the terms explicitly:
\begin{equation*}
s(x,y) = \sum_{i,j} w_{ij}  \boldsymbol{\psi}(x)_i\boldsymbol{\phi}(y)_j\,,
\end{equation*}
with $w_{ij}$ denoting the element from $\mathbf{W}$ on the $i$-th row and $j$-th column.

Models of this type are frequently encountered in several subfields of machine learning, such as pairwise learning, content-based filtering, biological network inference, dyadic prediction and conditional ranking. They are particularly popular in certain domains of bioinformatics, such as the prediction of protein-protein interactions~\cite{Ben-Hur2005,Hue2010,Vert2007}, enzyme function prediction~\cite{Stock2014} and proteochemometrics~\cite{Gonen2012,Jacob2008a,Yamanishi2010}.

Some authors, such as~\cite{Ding2013}, claim that pairwise methods in kernel form are not viable for large-scale problems as they scale $\mathcal{O}(N^2M^2)$ in memory and $\mathcal{O}(N^3M^3)$ in time complexity for most algorithms. For some models though, when the kernels can be factorised using the Kronecker product, the memory and time complexity reduce to $\mathcal{O}(N^2+M^2)$ and $\mathcal{O}(N^3+M^3)$, respectively, see~\cite{pahikkala2013conditional,Waegeman2012}. In a similar vein, this factorization allows for fast querying using the threshold algorithm and Fagin's algorithm. 

Recently, a method called two-step regularised least squares was proposed, as an intuitive way to bring this paradigm to practice~\cite{Pahikkala2014}. Prior knowledge on similarities of the different tasks is incorporated in a standard multi-output ridge regression by performing a second ridge regression, using a kernel matrix describing the similarity between these tasks. It can be shown that two-step regularised least squares is equivalent to using ridge regression with a kernel based on the Kronecker product.

\section{Experimental results}\label{experiments}

The goal of this experimental section is twofold. Firstly, we want to show that the proposed methods can be applied to the different machine learning settings described above. In addition to that, we also want to illustrate the scalability and practical use of the algorithms. By means of several case studies using different datasets we will elucidate the algorithms of Section~\ref{queryAlgorithms}. Fagin's algorithm is included in this manuscript for didactic interest, but we observed in initial experiments that it could not cope adequately with most of the higher-dimensional problems. The buffer needed to store targets that are encountered in the lists grows rather quickly when the dimensionality of the data increases. Hence, we have omitted experiments with this algorithm.  Instead, for the experimental validation we focus on the practical use of the threshold and partial threshold algorithm and compare their performance with the naive algorithm.

\subsection{Data dependencies in collaborative filtering}\label{CFexperiments}

In this series of experiments we study the scalability of the threshold algorithm for a collection of collaborative filtering datasets. We use five sparse datasets for memory-based and model-based collaborative filtering using the cosine similarity and matrix factorization respectively. The datasets are listed in Table~\ref{ColFiltData}. We use the Movielens dataset with both 100K and 1M, the BookCrossing dataset, both from Grouplens\footnote{http://grouplens.org/}, the Audioscrobbler dataset\footnote{\href{http://www-etud.iro.umontreal.ca/~bergstrj/audioscrobbler_data.html}{http://www-etud.iro.umontreal.ca/\~{}bergstrj/audioscrobbler\_{}data.html}}, and a recipe composition dataset\footnote{This dataset \cite{Ahn2011} is strictly speaking not a collaborative filtering benchmark dataset, but it can be treated using a similar workflow. Here the goal would be to find related recipes for a particular ingredient combination. See~\cite{DeClercq2015a} for an recommender system built using this dataset.}. We have removed empty rows and columns and taken the logarithm of the positive values of the Audioscrobbler dataset. In the Audioscrobber and Recipes datasets the scores are obtained by implicit feedback (i.e.\ a zero indicates no observation for a given item-user pair), for the others by explicit feedback (i.e.\ a zero indicates no match for a given item-user pair).

\begin{table}[tb]
   \centering
   \topcaption{Overview of the properties of the different datasets used for the collaborative filtering experiments using memory- and model-based methods.} 
   \begin{tabular}{l l l l l} 
\hline
\hline
Name & \# rows & \# colums & \# non-zero elements& Feedback \\ 
\hline
Audioscrobbler & 73,458 & 47,085 & 656,632 & implicit\\
BookCrossing & 105,283 & 340,538 & 1,149,780 & explicit\\
Movielens100K & 943 & 1,682 & 100,000 & explicit\\
Movielens1M &  6,040 & 3,952 & 1,000,000 & explicit\\
Recipes & 56,498 & 381 & 464,407 & implicit\\
\hline
\hline
   \end{tabular}
   \label{ColFiltData}
\end{table}

{
For memory-based collaborative filtering, we used the cosine similarity to define a set of items with a high similarity to the query. As described above, if each item is normalized, such that the $L_2$-norm is equal to one, a dot product is equivalent to the cosine similarity. For model-based collaborative filtering, we applied matrix factorization by means of  probabilistic PCA~\cite{Tipping1997}, having as advantage that a computationally efficient expectation maximisation algorithm can be used. We used 5, 10, 50, 100 and 250 latent features in the decomposition of all datasets. 
}

{
We performed a large series of queries to compare the performance of the different algorithms discussed in Section\ \ref{queryAlgorithms}. For memory-based collaborative filtering, we used the naive and threshold algorithms. For matrix factorization, the partial threshold algorithm was also used for querying. For each experiment we randomly chose a dataset, the size of the top among the values 1, 5, 10, 50 and 100 and ten queries from the rows of the data matrix. To assess the effect of the size of the database, we randomly witheld 10\%, 50\% or all the items in the database. Each of the algorithms was applied to find the same top-$K$ relevant targets for all these settings.
}

{
The number of scores calculated by the threshold algorithm relative to the number of scores calculated by the naive algorithm for the different settings is given in Figure~\ref{colfiltresult}. As guaranteed by the design of the threshold algorithm, in the worst-case scenario the same number of items has to be scored as by the naive algorithm. Clearly, the average gain increases for both settings with database size, as expected. Furthermore, the larger the top-$K$ that is sought, the smaller the gain. The difference between top-1 and top-50 seems to be roughly an order of magnitude for most cases. Comparing the memory-based with the model-based setting, the former shows larger improvements in computational costs, even for small datasets and large top sizes. This is because the threshold algorithm can deal well with sparseness, as discussed in Section~\ref{queryAlgorithms}. For the matrix factorization experiments, we see that the relative efficiency decreases when the number of latent features $R$ increases. Large improvements in running time are mainly to be expected when the dimensionally of the data is modest.
}

\begin{figure}[htbp]
   \centering
   \includegraphics[width=\textwidth]{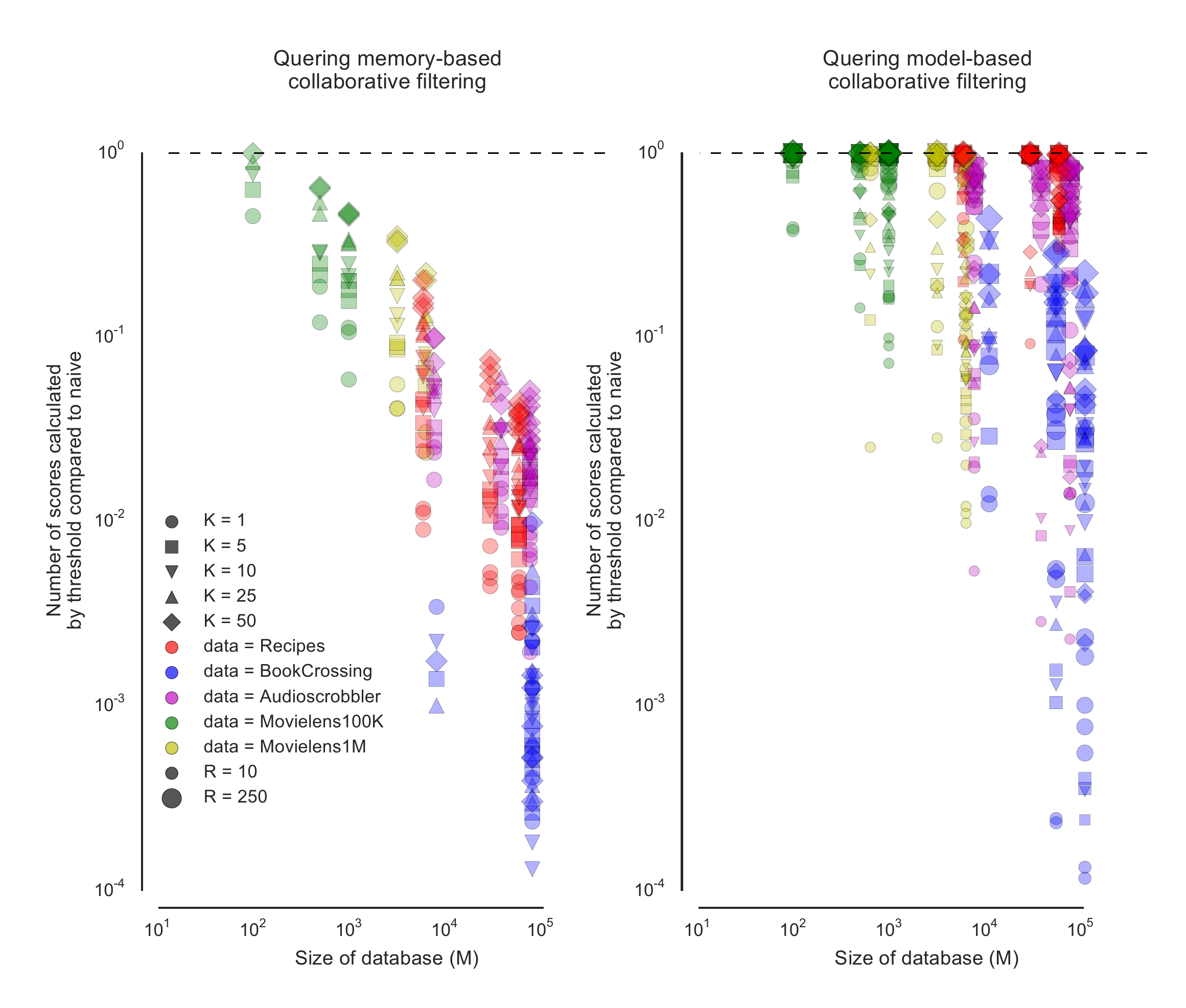} 
   \caption{Performance of the threshold algorithm relative to the naive algorithm for the collaborative filtering models based on memory and matrix factorization. Marker shape, color and sizes indicates top size, dataset and number of latent features in the case of model-based collaborative filtering.}
   \label{colfiltresult}
\end{figure}

\subsection{Running time and partial threshold algorithm using Uniprot data}

{
In this experiment we study the improvement in querying time and whether the computation time can be reduced by means of the partial threshold algorithm. To this end we use a large multi-label classification dataset related to protein function. We downloaded the complete Uniprot database\footnote{http://www.uniprot.org}, containing more than half a million of annotated protein sequences. We consider the problem of functionally annotating a protein based on its amino acid sequence. The function of the protein is represented by the gene ontology (GO) label. The GO annotation represents the molecular function, location and biological role of a protein. We ended up with a dataset containing 211,149 proteins with 21,274 distinct labels.
}

As a feature representation, we used weighted subsequence kernels~\cite{Shawe-Taylor2004}, which define similarity between sequences based on subsequences of a fixed length. It is infeasible to compute a complete kernel matrix for hundreds of thousands of instances. We performed an approximation inspired by the Nystr\"om method~\cite{Drineas2005}, a method to approximate the decomposition of large kernel matrices. We arbitrarily selected 500 reference substrings with a length of 100 amino acids from the database, for which we calculated the different subsequence kernels with subsequence lengths from one to twenty with respect to the full protein sequences in the dataset. Thus, if we treat the subsequence length as a hyperparameter, each protein is described by 500 features. In order to find the best model, the models were trained using 80\% of the data, while 20\% was withheld as a validation set. All models were evaluated by calculating the AUC over the different functional labels, for each protein separately, and averaging over different proteins. Apart from the subsequence length, each model has its own hyperparameter that had to be optimized:
\begin{itemize}
\item \textbf{Ridge}: the best combination of substring length and regularisation parameter from the grid $\{0.01, 0.1, 1, 10, 100\}$ was chosen,
\item \textbf{PLS}: the dimension of the latent subspace was treated as a hyperparameter, and selected from the grid $\{10, 50, 100, 250\}$.
\end{itemize}
{
We measured the classification performance using the area under the ROC curve (AUC) over the labels, averaged over all proteins in the test set (i.e. instance-wise AUC). For ridge regression, we obtained an AUC of 0.982, while for the PLS model an AUC of 0.980 was attained. We used the naive, threshold and partial threshold algorithm to query for the top 1, 5, 10, 25 or 50 most likely labels on the complete label space or random 10\% or 50\% subsets. Each datapoint represents the average efficiency of a batch of ten protein queries. The results are represented in Figure~\ref{UniprotResults}.
}

{
In the first graph of Figure~\ref{UniprotResults}, we plotted the relative improvement in number of scores of the threshold algorithm plotted against the improvement in running time for a series of queries on the Uniprot dataset. It is clear that an improvement in the number of scores translates into a proportional improvement in running time ($R^2=0.964$). Similarly as for the collaborative filtering results, larger improvements occur for large databases and small top sizes. It is interesting to note that much larger improvements are recorded for the ridge regression model than for PLS. Most likely, this is because PLS orthogonalizes the variables before mapping to the output space, influencing the distribution of the features. 
} 

{
The second graph of Figure~\ref{UniprotResults} compares the computational performance of the partial threshold algorithm described in Section~\ref{queryAlgorithms} with the standard threshold algorithm. Recall that the number of items the partial threshold algorithm tries to score is always the same as for the threshold algorithm. In most cases, the partial threshold algorithm will only calculate a fraction of a score. To compare with the threshold algorithm, we measure the average fraction of partial scores $u(x)_rt(y)_r$ that are added. Hence, each item that is fully scored by the partial threshold algorithm is given a value of one, while partially scored items receive the value of the proportion calculated in Algorithm~\ref{PartialThresholdAlgorithm}. Here, we show that the partial threshold algorithm always has to calculate less full scores compared to the threshold algorithm. Since this algorithm has some extra overhead compared to the original threshold algorithm (the score is updated in two steps), this algorithm is only expected to improve in time when on average only a relatively small fraction of a score has to be calculated. Only for ridge regression with top size equal to one, we see an improvement noticeably larger than 200\%. For these queries we observe a small improvement in running time compared to the threshold algorithm, while for the other experiments a small increase in running time is detected. Likewise, for the model-based collaborative filtering datasets of Section~\ref{CFexperiments}, a similar pattern was observed. The partial threshold algorithm has to calculate only roughly half a score on average and in practice shows something between a minor increase or decrease in running time.
}

\begin{figure}[t] 
   \centering
   \includegraphics[width=\textwidth]{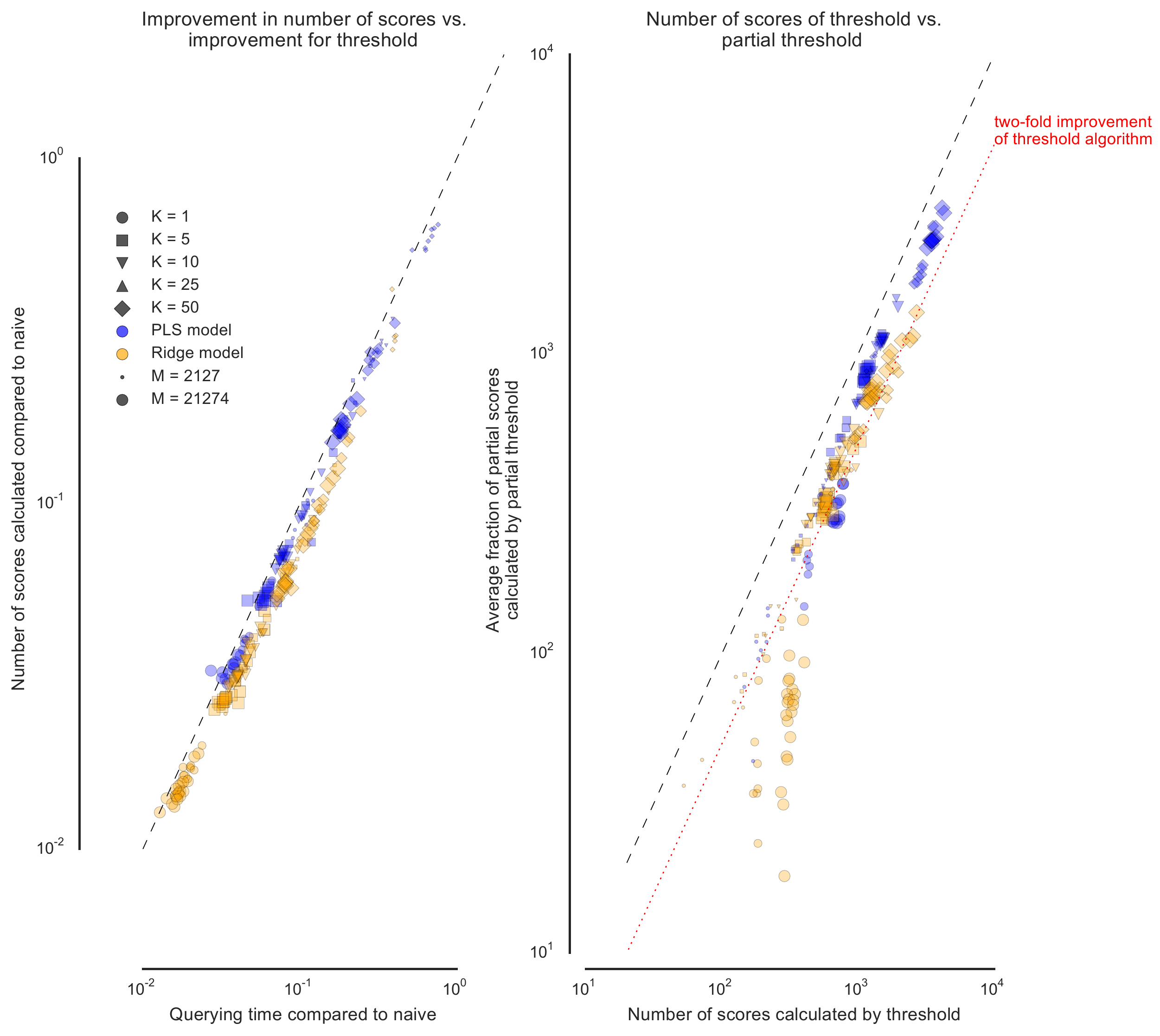} 
   \caption{Left: improvement in number of scores of the threshold algorithm plotted against the improvement in running time for a series of queries on the Uniprot dataset. Right: number of scores that are calculated by the partial threshold algorithm versus the number of scores needed for the threshold algorithm to find the top labels for the Uniprot data.}
   \label{UniprotResults}
\end{figure}

\subsection{Behaviour of individual queries}\label{queryzoom}

\begin{figure}[htbp]
   \centering
   \includegraphics[width=\textwidth]{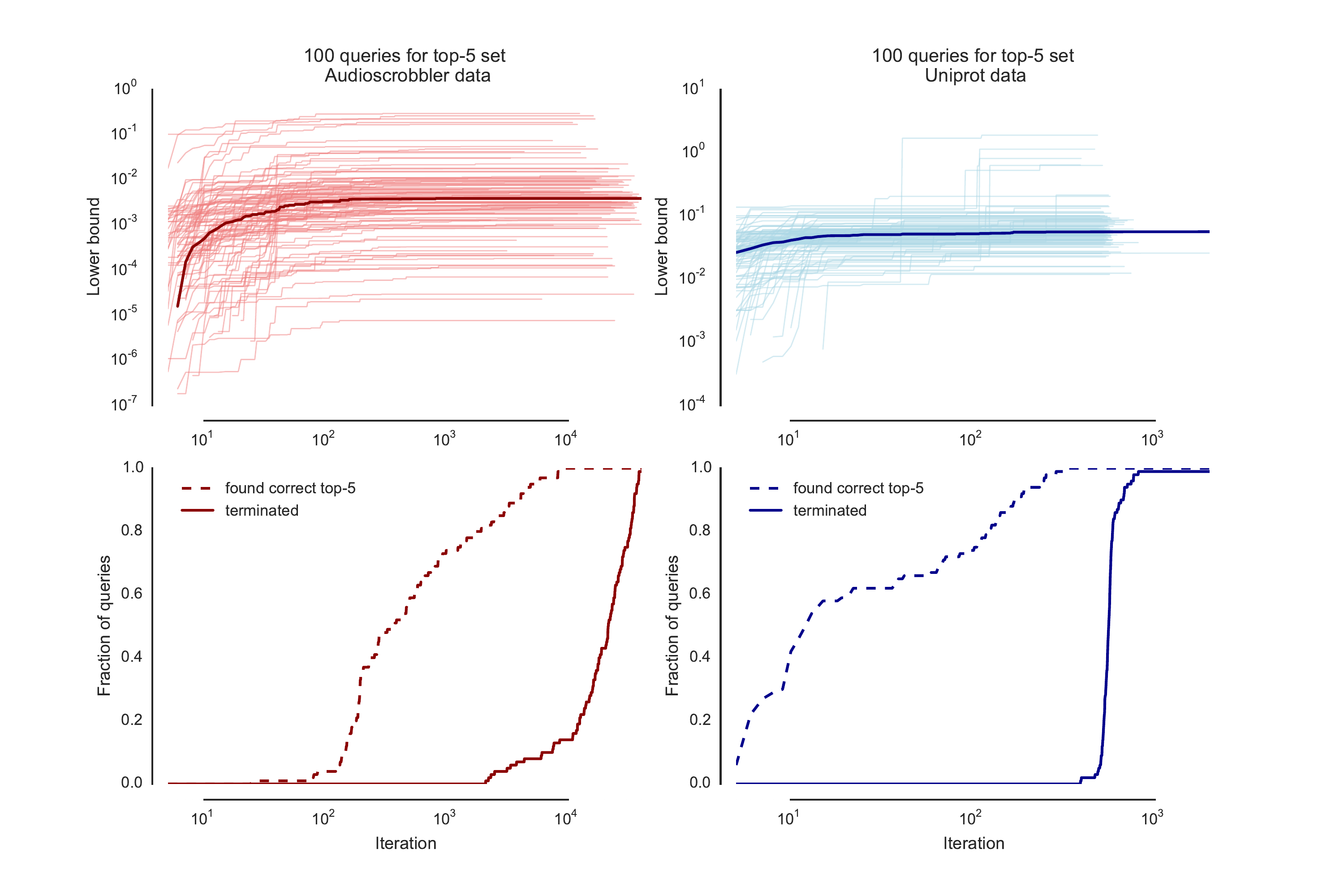} 
   \caption{Detail of 100 queries for the top-5 for the Audioscrobbler dataset (model-based collaborative filtering with 50 latent features) and the Uniprot dataset (using partial least squares). The top graphs give the values of the lower bound for the individual queries throughout the run of the threshold algorithm. The median of the upper bound is given by the bold line. The bottom graphs show the fraction of queries that have found the correct top at a step in the algorithm and after how many steps the algorithm terminates.}
   \label{induvidualqueryplot}
\end{figure}

{
To study the behaviour of the threshold algorithm in more detail, we analyzed the lower bound for 100 queries for the top-5 for the Audioscrobbler dataset (model-based collaborative filtering with 50 latent features) and the Uniprot dataset (using partial least squares). The results are represented in Figure~\ref{induvidualqueryplot}. These plots demonstrate that even though it may take many iterations for the algorithm to terminate, the correct top is often found swiftly. The top graphs monitor the lower bound (i.e.\ the worst score in the current top-$K$ set). In most cases this lower bound does not increase dramatically after a few hundred iterations. Hence, if the desired top does only need to contain good candidates instead of the correct top-$K$ set, one can decrease computation time by early halting. Furthermore, the lower plots demonstrate that, at least for these cases, the threshold algorithm often finds the correct top much sooner than it terminates. This can be seen in the lower plots of Figure~\ref{induvidualqueryplot} where there is a strong lag between the time the correct top is found and the time the algorithm returns this top. We can conclude that it could make sense in some cases to use the halted threshold algorithm (see~\cite{Fagin2003}) which stops when the computational resources have run out (e.g.\ when a specified maximum querying time has been exceeded).
}
\subsection{Large-scale text classification}

In this final experiment we demonstrate the scalability of the threshold algorithm on a huge text mining task with the goal of putting Wikipedia articles in 325,056 possible categories. This dataset was downloaded from the Kaggle `Large Scale Hierarchical Text Classification' challenge\footnote{https://www.kaggle.com/c/lshtc} \cite{Partalas2015}. The training data contained 2,365,436 articles which were labeled with the different categories they belong to. Each article had a feature description in the form of a sparse bag-of-words vector, with a length of 1,617,899. The word counts were transformed into term frequency-inverse document frequencies~\cite{Manning2008}. We performed partial least squares regression to map the feature vectors to the output space. This approach is similar to that performed by~\cite{Mineiro2014} on this dataset. The models had 10, 50, 100, 500 and 1000 latent features. The threshold algorithm was used for the different models to fetch the highest-scoring class (i.e. $K$=1). The models were evaluated using average AUC over the labels and precision-at-1 (the percentage of queries that returned the correct class at the top) on a random test set of 10,000 articles. As a baseline, we also included a popular recommender which always returns a score proportional to the probability of the label occurrence, independent of the word-features. The results are represented in Table~\ref{WikiPerf}.

\begin{table}[tb]
   \centering
   \topcaption{Performance of the models for the text classification data. The partial least squares models were calculated with different numbers of latent features $R$. The performance is calculated for 10,000 test articles of the categories and averaged. A popular recommender is used as a baseline and has an AUC of 0.848 and a precision-at-1 of 0.165.} 
   \begin{tabular}{l  lllll} 
\hline
\hline
$R$ & 10 & 50 & 100 & 500 & 1000 \\ 
\hline
AUC &0.918& 0.955 & 0.962& 0.975 & 0.980 \\
Precision-at-1 & 0.182 & 0.191 & 0.197 & 0.218 & 0.237\\
Average number of scores calculated & 28.3 & 179.4 & 441.7 & 3581.3 & 8995.7\\
\hline \hline
   \end{tabular}
   \label{WikiPerf}
\end{table}

All our models have a better prediction accuracy than the baseline and the performance increases with the number of latent features used. As was the case for the previous experiments, the average number of scores needed to find the top increases with $R$, but seems to scale well (when $R$ is 1000, on average only 2.8\% of the classes have to be considered).

\section{Conclusions and future perspectives}\label{Faginconclfutwork}


In this paper we have discussed three generally-applicable methods for finding efficiently the most relevant targets in a wide range of multi-target prediction problems. The experimental results clearly confirm that the threshold algorithm is quite scalable, and often many orders of magnitude more efficient than the naive approach. We can thus conclude that this algorithm is very suitable for querying in large-data settings using many popular machine learning techniques. {We have shown experimentally that large improvements can be expected when the size $M$ of the database is large and when only a small top-$K$ set is requested. Furthermore, the number of scores that have to be calculated increases with the dimension $R$. When querying time is vital, it is thus recommended to consider the use of (supervised) dimension reduction techniques on the features. Empirically, we have also shown that the threshold algorithm can cope well with large sparse feature vectors.}

{
We proposed a small modification to the threshold algorithm. By making use of the lower and upper bound, the calculation of the scores can be terminated early while still retaining an exact method. In the experiments, we provided a proof-of-concept, showing a modest reduction in the number of calculations. In practice, due to the extra computational overhead that is created, this is not always translated into an reduced running time. We are optimistic that this modification can be of use in future applications for two reasons. Firstly, certain hardware implementations might be suited for this algorithm as it is now hard to beat optimized implementations for ordinary matrix multiplication. Secondly, we believe that our partial threshold algorithm may lead to very efficient inexact querying algorithms. In Section~\ref{queryzoom} we demonstrated that the threshold algorithm is often overly conservative to determine when the correct top is found. In future work, we will study the trade-off between uncertainty in the top-$K$ set and  computational cost.
}

The framework that we presented works well for SEP-LR models corresponding to Eq.\ (\ref{s_in_tu}), which applies to a wide range of statistical models. However, in certain applications, one also encounters pairwise features that jointly describe information of a query and a target. For example in bioinformatics, if one wants to find miRNAs (small nucleic acid sequences with a regulating function) that can interact with messenger RNA, one sometimes considers the number of nucleotide matches between these sequences (e.g.~\cite{DePaepe2015}). Similarly, when recommending products to users, one could consider as a feature the number of views a user had on a page where the product is mentioned. Features of that kind cannot be incorporated into SEP-LR models. One would need linear models of the following form:
\begin{equation*}
s(x,y) = \sum_{r = 1}^R q_r(x, y)\, ,
\end{equation*}
with $\boldsymbol{q}(x,y) = (q_1(x, y),q_2(x, y),...,q_R(x, y))^\intercal$ a pairwise representation of the query and target. This can be thought of as a generalisation of Eq.\ (\ref{s_in_tu}). Note that the feature description is never known in advance, because it depends upon the query $x$. Finding efficient algorithms for dealing with these types of models is also an interesting topic for future research.
\section*{Acknowledgments}
Part of this work was carried out using the Stevin Supercomputer Infrastructure at Ghent University, funded by Ghent University, the Hercules Foundation and the Flemish Government - department EWI.
\vskip 0.2in
\bibliographystyle{spmpsci}      

\bibliography{library,referenties}

\end{document}